\documentclass[12pt]{article}
\usepackage{amsmath}
\usepackage{amssymb}
\usepackage{tikz}
\usetikzlibrary{positioning,calc,shapes,arrows,positioning}
\usepackage{graphicx,psfrag,epsf}
\usepackage{enumerate}
\usepackage{natbib}
\usepackage{url} % not crucial - just used below for the URL 
\usepackage{xcolor}
\usepackage{dsfont}
\usepackage{algorithm}
\usepackage{algpseudocode}
\usepackage{booktabs}
\usepackage{amsthm}
\usepackage{changes}
\usepackage{pgfplots}
\usepackage{hyperref}
\usepackage{array}
\usepackage{ragged2e}
\usepackage{threeparttable}

\newtheorem{theorem}[]{Theorem}

\newcommand\reps{1000}

\newcommand\timereps{25}
\long\def\ignore#1{}
\newcommand{\blind}{1}
\newcommand{\journal}{0}

\usepackage{shortcuts}

\begin{document}

\def\spacingset#1{\renewcommand{\baselinestretch}%
{#1}\small\normalsize} \spacingset{1}

%%%%%%%%%%%%%%%%%%%%%%%%%%%%%%%%%%%%%%%%%%%%%%%%%%%%%%%%%%%%%%%%%%%%%%%%%%%%%%

\if1\blind
{
  \title{\bf Fast Uncertainty Quantification for Kernel-Based Estimators in Large-Scale Causal Inference}
  \author{Matthew Kosko\thanks{Corresponding author: Matthew.Kosko@nyulangone.org. }\\
    Department of Population Health, New York University,\\ New York, NY, 10016\\
    Falco J. Bargagli-Stoffi\thanks{This article is based upon work supported by The Amazon Web Services (AWS) funding for ``AI/ML for Identifying Social Determinants of Health''.} \\
    Department of Biostatistics, University of California, \\ 
    Los Angeles, CA, \\
    Lin Wang \\
    Department of Statistics, Purdue University,\\
    West Lafayette, IN 47907, \\
    and \\
    Michele Santacatterina\thanks{
    This article is based upon work supported by the National Science Foundation under Grant No 2306556,  and the National Institute of Health Grant No 1R01AI197146-01.} \\
    Department of Population Health, New York University,\\
    New York, NY, 10016}
  \maketitle
} \fi

\if0\blind
{
  \bigskip
  \bigskip
  \bigskip
  \begin{center}
    {\LARGE\bf Fast Uncertainty Quantification for Kernel-Based Estimators in Large-Scale Causal Inference}
\end{center}
  \medskip
} \fi

\newpage
\bigskip
\begin{abstract}
Kernel methods are widely used in causal inference for tasks such as treatment effect estimation, policy evaluation, and policy learning. The bootstrap is a standard tool for uncertainty quantification because of its broad applicability. As increasingly large datasets become available, such as the 2023 U.S. Natality data from the National Vital Statistics System (NVSS), which includes 3,596,017 registered births, the computational demands of these methods increase substantially. Kernel methods are known to scale poorly with sample size, and this limitation is further exacerbated by the repeated re-fitting required by the bootstrap. As a result, bootstrap-based inference for kernel-based estimators can become computationally infeasible in large-scale settings. In this paper, we address these challenges by extending the causal Bag of Little Bootstraps (cBLB) algorithm to kernel methods. Our approach achieves computational scalability by combining subsampling and resampling while preserving first-order uncertainty quantification and asymptotically correct coverage. We evaluate the method across three representative implementations: kernelized augmented outcome-weighted learning, kernel-based minimax weighting, and double machine learning with kernel support vector machines. We show in simulations that our method yields confidence intervals with nominal coverage at a fraction of the computational cost. We further demonstrate its utility in a real-world application by estimating the effect of any amount of smoking on birth weight, as well as the optimal treatment regime, using the NVSS dataset, where the standard bootstrap is prohibitively expensive computationally and effectively infeasible at this scale. 
\end{abstract}

\noindent%
{\it Keywords:}  causal bootstrap; real-world data; propensity score; covariate balance; machine learning
\vfill

\newpage
\spacingset{1.45} % DON'T change the spacing!

\newpage

\section{Introduction}
\label{sec:intro}

Kernel methods represent a versatile class of algorithms that enable powerful nonparametric modeling, and they have become essential tools across multiple areas of causal inference. Applications include modeling treatment effect heterogeneity \citep{ImaiRatkovic2013}, learning optimal treatment policies \citep{zhao2012estimating, zhou2017augmented, zhou2017residual}, estimating kernel minimax weights to correct for covariate imbalance \citep{kallus2021more, kallus2022optimal, hirshberg2021augmented}, estimating nuisance functions in doubly robust estimators \citep{naimi2017nonparametric}, and estimating dose-response curves \citep{singh2024kernel,kennedy2017non}. These methods operate by mapping data into high-dimensional feature spaces, enabling the use of linear operations to capture complex nonlinear relationships. We provide additional details on the central role of kernel methods in these approaches in Section~\ref{sec:rel_work}.

The bootstrap methodology \citep{eforn1979bootstrap,efron1994introduction} has been used to calculate precision estimates for statistical estimators, such as standard errors and confidence intervals. The literature on the bootstrap's use in causal inference, particularly for large datasets, is still relatively sparse (see \cite{kosko2023fast} and section \ref{sec:rel_work} below). 

The need to scale both estimation and inference methods is increasingly pressing, as researchers are collecting and analyzing ever-larger observational and experimental datasets to infer causal effects. For example, \cite{bargagli2020causal} and \cite{wu2024matching} used data on over 35 million Medicare beneficiaries across the US to study heterogeneous treatment effects and estimate average causal exposure-responses, respectively. \cite{nazaret2023large} analyzed 76 billion minutes of Apple Watch activity data from 160,000 users to evaluate a standing nudge (hourly alert to stand) using a regression discontinuity design (RDD). \cite{de2024ambient} conducted a multi-city Indian study on air quality and mortality, leveraging  3.6 million death records across 10 cities (2008--2019) using  instrumental variables (IV) to isolate the effect of locally generated PM2.5 pollution on deaths.

Despite their usefulness, neither kernel methods nor the bootstrap scale well to large datasets. Although increased computing power has made the bootstrap more accessible, it remains computationally burdensome when applied to large datasets, particularly in settings that require complex resampling procedures. To overcome these challenges, \cite{kleiner2014scalable} proposed the bag of little bootstraps (BLB), which, instead of applying an estimator to a smaller subsample, deploys the bootstrap on multiple subsets or ``bags'' of the data. Finally, BLB draws bootstrap samples equal to the size of the full dataset. While BLB improves the bootstrap's efficiency, it was not introduced as a method for quantifying the precision of causal effect estimators. \cite{kosko2023fast} introduced a new bootstrap algorithm called causal bag of little bootstraps (cBLB) for estimating average treatment effects with large data. The authors used a weighted version of the BLB to obtain estimates of uncertainty for variance-stabilized IPW estimators using weights obtained by logistic regression, covariate balancing propensity score (CBPS) \citep{imai2014covariate}, and machine learning. \cite{kosko2023fast} showed that their proposed method significantly enhances the computational efficiency of the traditional bootstrap while maintaining consistent estimation of the average treatment effect from large datasets.

Although recent advances, such as cBLB, offer promising results, they still inherit many limitations of traditional approaches. In particular, they are susceptible to bias under model misspecification \citep{kang2007demystifying}, may yield invalid inferences when using flexible, data-adaptive methods \citep{naimi2023challenges}, and can lead to erroneous inferences under lack of overlap \citep{kallus2021more}. In addition, the computational complexity only increases with the use of kernels. As is well-known, these kernel-methods are computationally-intensive with large datasets and there has been increasing interest in how to scale these methods \citep{kim2023scalable}

\subsection{Contribution}

Our work is motivated by the increasing usage of the bootstrap to construct confidence intervals in the context of causal inference with large real-world data using flexible kernel methods and accounts for the shortcoming of cBLB approaches. In particular, we contribute to this area of research by introducing a new bootstrap algorithm that significantly improves the computational efficiency of the standard nonparametric bootstrap for calculating standard errors and confidence intervals of kernel-based estimators. Our approach builds upon and extends the work of \cite{kleiner2014scalable} and \cite{kosko2023fast} on bag of little bootstrap algorithms for statistical and causal inference estimators. 

Our approach, which we call the causal bag of little bootstraps (cBLB), replaces full-sample bootstrap refitting with a subsampling-and-resampling scheme tailored to kernel estimators. We first randomly partition the full dataset into $s$ disjoint bags of size $b \ll n$, with $b$ chosen so that kernel fitting remains computationally feasible. Within each bag, we fit the kernel-based estimator (and any required nuisance models or optimization problems) once, compute per-observation contributions that can be re-aggregated under resampling weights, and then hold these fitted objects fixed across the $r$ bootstrap replicates for that bag. These contributions may be influence-function/score terms for AIPW/DML-type estimators or weighted loss/residual terms for kernelized policy learning. We then generate $r$ bootstrap replicates by drawing multinomial counts $M_1,\ldots,M_b \sim \mathrm{Multinomial}\!\left(n;\frac{1}{b},\ldots,\frac{1}{b}\right)$ and forming the bootstrap replicate as the weighted average $\hat{\theta}^{*(\ell)}=\frac{1}{n}\sum_{a=1}^b M_a \hat{\theta}_{i_a}$, which mimics an $n$-out-of-$n$ nonparametric bootstrap without recomputing kernels or refitting models. Finally, we construct confidence intervals within each bag, for example percentile intervals across the $r$ replicates, and aggregate across bags by averaging the bag-specific quantiles, yielding a scalable approximation to full-bootstrap uncertainty.

Through Monte Carlo simulations, we show that the proposed method substantially improves computational efficiency while maintaining nominal finite-sample coverage, consistent with the first-order uncertainty quantification and asymptotically correct coverage established by our theory. We evaluate the method across three representative implementations: optimal policy learning via kernelized augmented outcome-weighted learning (AOL) \cite{zhou2017augmented}, kernel-based minimax weighting \cite{hirshberg2021augmented}, and double machine learning (DML) with kernel support vector machines (SVMs) \cite{kennedy2022semiparametric}. These three different methods represent a range of kernel-based approaches used in causal inference, and they allow us to demonstrate the versatility and robustness of our proposed method across different settings. In all cases, our method yields confidence intervals with nominal coverage at a fraction of the computational cost. In addition, we provide the \texttt{R} code for the algorithm at \url{https://github.com/mdk31/kernel-blb}.

\subsection{Smoking and Birthweight in the NVSS: A Setting Where the Standard Bootstrap is Computationally Prohibitive}

We demonstrate our proposed method using the 2023 National Vital Statistics System (NVSS) natality dataset, which contains 3,596,017 registered singleton live births in the United States \citep{nchs_natality2023_userguide}. The dataset includes detailed information on maternal sociodemographic characteristics, prenatal care, health behaviors including trimester-specific smoking status, and infant outcomes including birthweight. This setting is compelling for two reasons. First, the well-characterized causal effect of smoking on birthweight provides a credible benchmark against which to evaluate our estimators: we expect negative effects of plausible magnitude (%maternal smoking during pregnancy is one of the most well-established modifiable risk factors for reduced infant birthweight, with decades of evidence consistently documenting 
research shows reductions on the order of 150--250 grams \citep{hellerstedt1995effects, wisborg2001maternal, adegboye2010maternal, cattaneo2010efficient}).  Second, the scale of the dataset is precisely the regime in which standard bootstrap inference becomes computationally infeasible, motivating the need for scalable resampling alternatives such as the one proposed.

In this motivating application, we estimate distinct causal quantities using our proposed methods. The kernel minimax weights and DML with SVMs approaches both estimate the average treatment effect of smoking during pregnancy on birthweight. The AOL approach instead learns an optimal treatment regime. This means to learn a decision rule mapping maternal characteristics to a smoking recommendation. %AOL learns this decision rule and estimates its value, that is, the expected birthweight under the learned policy. 
Given the overwhelming evidence of harm, we expect the estimated average treatment effects to be negative and of plausible magnitude, and the learned regime to universally recommend abstinence from smoking, with the corresponding optimal value reflecting the mean birthweight in a population following this recommendation. 

We emphasize that the purpose of this application is not to find new results, but rather to provide a credible validation of our methods against a well-understood ground truth. %At the same time, the framework allows us to examine heterogeneity in the magnitude of the birthweight penalty across maternal subpopulations, which has practical relevance for targeting smoking cessation interventions toward groups where the harm is most pronounced. 
Using this dataset, we estimate causal effects and construct confidence intervals via our proposed cBLB procedure, demonstrating its practical utility for scalable causal inference in large-scale observational studies, where the standard bootstrap is prohibitively expensive computationally and effectively infeasible at this scale.

\subsection{Related Work}
\label{sec:rel_work}

Kernel methods have been widely used in personalized medicine and optimal policy learning. Optimal policy learning seeks to develop decision rules that recommend treatments based on individual characteristics to maximize expected outcomes. This approach, also known as personalized medicine or individualized treatment rules (ITRs), acknowledges that individuals may respond differently to the same treatment, making personalization essential for optimal care \citep{zhao2012estimating}. Several methodological frameworks have emerged in the optimal policy learning literature. One is the transformation of treatment regime estimation into a weighted classification problem using kernel machines, as demonstrated by outcome weighted learning (OWL) \cite{zhao2012estimating}. OWL typically employs SVMs with kernel functions to flexibly model nonlinear treatment decision boundaries. To address some limitations of OWL-such as sensitivity to outcome shifts and its inability to handle negative outcomes---residual weighted learning (RWL) was proposed \cite{zhou2017residual}. RWL also relies on kernel-based classification but replaces outcome weights with residuals from regression models, improving robustness and finite-sample performance \cite{zhou2017augmented}. Building on OWL and RWL, \cite{fu2019robust} developed Robust OWL (ROWL), which improves the stability of kernel-based treatment rules by using bounded, non-convex loss functions within the same learning framework. Another extension is Augmented OWL (AOL) by \cite{zhou2017augmented}, which incorporates outcome regression through augmented inverse probability weighting while preserving the kernel-based classification structure. This achieves semiparametric efficiency and avoids the non-convex optimization required by RWL.

Kernel methods naturally quantify distributional similarity through reproducing kernel Hilbert space (RKHS) embeddings \citep{hirshberg2021augmented, ben2021balancing, kallus2020generalized}, making them well-suited for covariate balancing methods that aim to estimate weights which balance covariate distributions across groups \citep{ben2021balancing}.
For example, \citet{kallus2020generalized} proposed a unifying framework that formulates matching and weighting as the problem of minimizing worst-case bias over a specified function class. They showed that several existing methods---including exact matching and propensity score weighting---can be interpreted as minimizing bias under different function classes, and introduced new kernel-based estimators that minimize worst-case bias in an RKHS. Subsequent work extended this framework to more complex settings, including marginal structural models, average treatment effects, and continuous exposures \citep{kallus2021more, kallus2022optimal, Santacatterina2021-SANOBO, kallus2019kernel}. In particular, Kernel Optimal Orthogonality Weighting (KOOW) \citep{ kallus2019kernel} minimizes the worst-case functional covariance between the treatment variable and confounders in an RKHS, while incorporating a regularization term to avoid extreme weights. Recent work has also addressed the computational bottleneck of kernel balancing in very large datasets by using the rank-restricted Nyst\"{o}m method to compute a kernel basis for balancing in nearly linear time and space, combined with scalable convex optimization for the weights \cite{kim2023scalable}. While this is an important contribution, it focuses on making a specific kernel weighting estimator scalable, whereas our proposed method makes bootstrap-based uncertainty quantification scalable for a broader class of kernel-based estimators.

Kernel methods also appear in the DML framework, a modern approach that uses machine learning for estimating nuisance functions while achieving unbiased, $\sqrt{n}$-consistent estimates of causal parameters \citep{chernozhukov2017double}. The idea behind DML is to estimate nuisance functions (e.g., the outcome $m(x)=E[Y|X=x]$ or propensity score $\pi(x)=P[W=1|X=x]$) with flexible methods, then form an orthogonalized estimating equation so that the first-order error in nuisance estimates does not bias the target parameter. Additionally, DML uses sample-splitting or cross-fitting to avoid regularization biases \citep{chernozhukov2017double, kennedy2023towards}. In this framework, any regression method achieving a sufficient rate of convergence can be used for the nuisances -- including kernel-based learners. For instance, \cite{naimi2017nonparametric} used a SuperLearner ensemble that included a SVM to estimate nuisance functions. Additionally, \cite{colangelo2020double} applied kernel-based double debiased machine learning estimators to estimate the effect of a continuous treatment.

When the bootstrap is used in causal inference, it typically involves drawing samples with replacement from the data and ``re-designing'' each sample to maintain covariate balance \citep{zhang2021designed,dagan2021bnt162b2}. There is more work on the bootstrap in the context of matching estimators; both \cite{abadie2008failure} and \cite{abadie2022robust} explored applying the bootstrap to matching. \cite{abadie2008failure} highlights that the standard bootstrap does not provide valid inference for matching estimators, requiring adjustments. Building on this, \citet{abadie2022robust} investigates how to compute valid standard errors for regression coefficients, including treatment effects, post-matching. Similarly, \cite{otsu2017bootstrap} addresses the issues raised by \citet{abadie2008failure} by developing a weighted bootstrap that does not count how many times an observation is used in estimating the bootstrap statistic, instead resampling observations directly. \cite{adusumilli2018bootstrap} introduces a modified bootstrap approach based on the concept of potential errors, while \cite{zhao2019sensitivity} applies the bootstrap to sensitivity analysis, providing confidence intervals for such analyses.

\textbf{Paper Organization.} The remainder of this paper is organized as follows. In Section~\ref{sec:setup}, we provide a detailed description of the kernel-based methods we consider in our simulation experiments and real-world application. In Section~\ref{sec:causalBLB}, we introduce our proposed method for fast uncertainty quantification for kernel-based estimators in large-scale causal inference and discuss some properties. In Section~\ref{sec:simulate}, we present the results of our Monte Carlo simulation studies evaluating the performance of our method across different settings and provide practical guidelines informed by these empirical results. In Section~\ref{sec:application}, we apply our method to the NVSS dataset to estimate the effect of smoking on birthweight and learn an optimal treatment regime. Finally, in Section~\ref{sec:conclusion}, we discuss the implications of our findings, limitations of our approach, and directions for future research.

\section{Setup: Kernel-based Estimators for Treatment Effect Estimation and Policy Learning}\label{sec:setup}

As described above, kernel methods are versatile tools used across a range of causal inference tasks. In this paper, we focus on three representative kernel-based methods that serve as case studies for our simulation experiments and real-world application. Specifically, we consider: (1) a kernelized version of the Augmented Outcome-Weighted Learning (AOL) method for estimating optimal treatment regimes \citep{zhou2017augmented}; (2) a covariate balancing estimator using kernel minimax weights within the Augmented Minimax Linear Estimation framework \citep{hirshberg2021augmented};  and (3) a DML estimator that employs kernelized SVM to learn nuisance functions \citep{kennedy2022semiparametric}.

\subsection{Kernelized Augmented Outcome-Weighted Learning}
\label{sec:kAOL}

The goal of optimal treatment regime estimation is to learn a decision rule that maps patient covariates $X$ to a treatment assignment $A \in \{-1, +1\}$ in order to maximize expected clinical outcomes. AOL frames this problem as a weighted classification task, where the misclassification loss is weighted by counterfactual residuals derived from a doubly robust estimator. While linear decision functions are computationally attractive, they may fail to capture complex, nonlinear treatment boundaries in high-dimensional covariate spaces.

To address this, AOL can be extended to a nonlinear setting by modeling the decision function $f(x)$ in a reproducing kernel Hilbert space (RKHS). Specifically, the decision rule is parameterized as $f(x) = h(x) + b$, where $h \in \mathcal{H}_K$ and $\mathcal{H}_K$ is the RKHS associated with a Mercer kernel $K(\cdot, \cdot)$. The learning objective is then to minimize a regularized empirical risk, with a convex surrogate loss applied to the sign-adjusted decision margin:
\[
\min_{h \in \mathcal{H}_K, \, b \in \mathbb{R}} \frac{1}{n} \sum_{i=1}^{n} \frac{|\tilde{r}_i|}{\pi(a_i, x_i)} \, \phi\left( a_i \cdot \text{sign}(\tilde{r}_i) \cdot (h(x_i) + b) \right) + \frac{\lambda}{2} \|h\|_{\mathcal{H}_K}^2,
\]
where $\phi(\cdot)$ is a convex surrogate loss (e.g., Huberized hinge loss), and $\tilde{r}_i$ denotes the counterfactual residual for observation $i$.

By the representer theorem, the solution $h(\cdot)$ admits the finite representation $h(x) = \sum_{j=1}^{n} v_j K(x, x_j)$, converting the infinite-dimensional problem into a finite-dimensional convex optimization over coefficients $\{v_j\}$ and bias $b$. This formulation enables AOL to learn flexible, nonlinear treatment regimes while retaining global convergence guarantees and computational efficiency via standard solvers such as L-BFGS.

Despite its flexibility, kernelized AOL introduces computational challenges that scale with the number of training examples. Specifically, the kernel representation requires storing and manipulating the full $n \times n$ Gram matrix, leading to $\mathcal{O}(n^2)$ memory complexity in optimization scenarios. This issue can become prohibitive in large datasets. 

\subsection{Kernel Minimax Weights}
\label{sec:kernel_weights}

For both the covariate balancing estimator using kernel minimax weights and the DML approach employing support vector machines to estimate nuisance functions, we focus, for simplicity, on estimands such as the average treatment effect, $\mathbb{E}[Y(1) - Y(0)]$. Specifically, we consider $\mathbb{E}[Y(a)]$, which is nonparametrically identified under standard causal inference assumptions (ignorability, positivity, and consistency) as
\[
\mathbb{E}[Y(a)] = \mathbb{E} \left[ \mathbb{E}[Y \mid A = a, X] \right] = \E[ m_a(X) ] = \Psi(m_a).
\]

We define $Y_i = m_a(X_i) + \epsilon_i$, which implies $\epsilon_i = Y_i - m_a(X_i)$ and $m_a(X_i) = Y_i - \epsilon_i$ where $\E[\epsilon_i \ | \ x_i] = 0$, and $\sigma^2 = \E[\epsilon_i^2 | A_i=a, X_i] = \textrm{Var}[Y_i | A_i=a, X_i]$ for $a=0,1$ and all $i$. Let $\delta_{m_a}(X_i) = \hat{m}_{a}(x_i) - m_{a}(X_i)$ (the the regression  error) and recall the augmented estimator for treatment $a$ we considered 
\begin{align*}
    \hat{\Psi}^{a} &= \frac{ 1 }{n} \sum^{n}_{i=1}  \hat m_{a}(X_i)   \\ 
    &- \frac{ 1 }{n}\sum^{n}_{i=1} \big[ \gamma(A_i, X_i)(\hat m_{a}(X_i) - Y_i)\big], \\ 
\end{align*}
\noindent
where $\gamma(A_i, X_i) = \I[A_i=a] \gamma(X_i)$ and where $\gamma(X_i)$ could be chosen to be set to $\frac{1}{\pi(X_i)}$ as in section \ref{sec:DML_SVM} below. By looking at the error of this estimator we can decompose this estimator as 

\begin{align*}
    \hat{\Psi}^{a} - \Psi(m_a) &= \frac{ 1 }{n} \sum^{n}_{i=1} \big[ (\delta_{m_{a}}(X_i) + m_{a}(X_i) ) \big] \\ 
    &- \frac{ 1 }{n}\sum^{n}_{i=1}\big[ \gamma(A_i, X_i)\delta_{m_a}(X_i)   \big] \\ 
    &+ \frac{ 1 }{n}\sum^{n}_{i=1}\big[\gamma(A_i, X_i)\epsilon_i \big] - \Psi(m_a)\\
    &= \underbrace{ \frac{ 1 }{n} \Big[\sum^{n}_{i=1} \delta_{m_a}(X_i) - \sum^{n}_{i=1}\gamma(A_i,X_i)\delta_{m_a}(X_i) \Big]}_{\text{imbalance in } \delta_{m_a}}\\
    &+ \underbrace{ \frac{ 1 }{n} \sum^{n}_{i=1}\gamma(A_i,X_i)\epsilon_i}_{\text{noise - mean zero}} + \underbrace{ \frac{ 1 }{n} \sum^{n}_{i=1} m_{a}(X_i)  - \Psi(m_a)}_{\text{sampling variation - mean zero}}
\end{align*}

We assume that $\delta_{m_a}(X_i)$ is contained in an absolutely convex set of functions $\mathcal{M}$ (i.e. a Hilbert Space). 

We then choose $\gamma(A_i,X_i)$ controlling the maximal imbalance in $\mathcal{M}$. In formulas we consider the following worst imbalance in $\mathcal{M}_a$,

\begin{center}
 $\mathbb{I}_{\mathcal{M}_a}(\gamma) = \sup_{\delta_{m_a} \in \mathcal{M}} \mid \frac{ 1 }{n} \Big(\sum^{n}_{i=1} \delta_{m_a}(X_i) - \sum^{n}_{i=1}\gamma(A_i,X_i)\delta_{m_a}(X_i) \Big)   \mid$   
\end{center}

Typically, $\gamma(A_i, X_i)$ is obtained by minimizing the worst-case imbalance while accounting for some measure of the complexity of the weights, ie., $\frac{\sigma^2}{n^2}\sum_i \I(V_i = v, S_i = 1) \gamma(A_i, X_i, V_i)^2$, where $\sigma^2$. Following standard practice \citep{hirshberg2021augmented,kallus2022optimal}, we choose as a model $\mathcal{M}_a$ the unit ball $\mathcal{B}_{\mathcal{H}} = \{ m_a(x) \in \mathcal{H} : \| m_a \|_{\mathcal{H}} \leq 1 \}$ of an RKHS $\mathcal{H}$. 
Define the matrix $K_a\in\mathbb R^{n\times n}$ as $K_{aij}=\mathcal K_a(X_{i},X_{j})$ and setting $\gamma_i = \gamma(X_i)$ for convenience. By the representer theorem, we have that

\begin{align*}
\mathbb{I}_{\mathcal{M}_a}(\gamma) &=  \sup_{\|m_a\|^2_{\mathcal{H}} \leq 1} \prns{ \frac{1}{n} \sum_{i = 1}^n  \prns{ 1-\I[A_i=a]\gamma_i} m_a(X_i)}^2 \\
&= \sup_{\sum_{i,j=1}^n\alpha_i\alpha_j\mathcal K_a(X_i,X_j)\leq1} \prns{ \frac{1}{n} \sum_{i = 1}^n  \prns{1 -\I[A_i=a]\gamma_i} \sum_{j=1}^n\alpha_j\mathcal K_a(X_i,X_j)}^2 \\
&= \sup_{\alpha^TK_a\alpha\leq1} \prns{ \frac{1}{n} \alpha^TK_a(e_{n}-I_{a}\gs)}^2 \\
&=\frac{1}{n^2}(I_{a}\gs-e_{n})^TK_a(I_{a}\gs-e_{n}) \\
&= \frac{1}{n^2} \prns{\gs^TI_{a}K_aI_{a}\gs-2 e_{n}^TK_aI_{a}\gs+e_{n}^T K_a e_{n}}
\end{align*}

\noindent
where $e_{n}$ is the length-$n$ vector with observations equal to 1, and $I_{a}$  is the $n$-by-$n$ diagonal matrix with $\I[A_i=a]$ in the $i^{th}$ diagonal entry. This leads to the following optimization problem
\begin{align*}
    \hat{\gamma} &=  \underset{\gamma}{\arg\min}\Big[\mathbb{I}^{2}_{\mathcal{M}}(\gamma) + \lambda \frac{\sigma^2}{n^2}\sum_i  \gamma(A_i, X_i, V_i)^2\Big], 
\end{align*}
\noindent
where $\lambda$ is an arbitrary penalization parameter, and 

\begin{align*}
    \mathbb{I}^{2}_{\mathcal{M}}(\gamma) = \mathbb{I}^{2}_{\mathcal{M}_1}(\gamma) + \mathbb{I}^{2}_{\mathcal{M}_0}(\gamma).
\end{align*}
Finally, the obtained weights $\hat{\gamma}$ are then plugged into the augmented estimator $\hat{\Psi}^{a}$. Similar to AOL, this introduces computational challenges that scale with the number of training examples, making uncertainty calculations with standard bootstrap computationally prohibitive in large datasets. 

\subsection{Double Machine Learning with Support Vector Machines}
\label{sec:DML_SVM}

Estimating treatment effects via DML involves constructing an orthogonalized moment function (score) that enables robust, unbiased estimation of targets like the average treatment effect (ATE) or conditional ATE (CATE) while using flexible machine learning to learn nuisance functions \citep{chernozhukov2017double, kennedy2023towards}. The key property is Neyman orthogonality (or pathwise differentiability), which means the moment condition is insensitive to small errors in the nuisance estimates \citep{chernozhukov2017double}. 

This framework yields augmented estimators in the form of $\hat{\Psi}^{a}$ defined in Section \ref{sec:kernel_weights} with desirable properties \citep{chernozhukov2017double,rotnitzky2021characterization}. Implementation requires estimating the nuisance functions $(\hat \pi, \hat m_a)$ using flexible learners, and employs sample-splitting (cross-fitting) to avoid overfitting biases \citep{chernozhukov2017double,robins2008higher}. In cross-fitting, the sample is partitioned into $K$ folds; for each fold, the nuisance models are trained on the other $K-1$ folds and then used to compute the moment function on the held-out fold

In this work, we use kernel-based SVMs to learn the propensity score $\pi(X)$ and outcome regressions $m_a(X)$ within the DML procedure. SVMs are appealing as first-stage learners because they provide a flexible nonparametric fit while enforcing regularization via convex loss minimization. There are, however, important computational considerations when using kernel SVMs in a cross-fitted DML. In popular implementations of SVMs as in scikit-learn, an SVM's time complexity scales between $\mathcal{O}(n_\textrm{features} \times n^2)$ and $\mathcal{O}(n_\textrm{features} \times n^3)$, making uncertainty calculations with standard bootstrap computationally infeasible for large datasets. \citep{scikit-learn, chapelle2007training}.

\section{Causal Bag of Little Bootstraps}
\label{sec:causalBLB}
We observe i.i.d.\ data $Z_i=(Y_i,A_i,X_i)$, $i=1,\dots,n$, where $A_i\in\{0,1\}$ is treatment, $X_i\in\mathcal X$ are covariates, and $Y_i=Y_i(A_i)$ under the potential outcomes framework \citep{imbens2015causal}. Causal Bag of Little Bootstrap (Algorithm~\ref{alg:cblb}) partitions the indices into $s$ disjoint subsets $I_k$ of size $b$ (chosen to cover almost all observations). Within each subset $I_k$, we fit all required objects (nuisance functions, kernel weights, and/or a decision rule) once using $\{Z_i:i\in I_k\}$, compute per-unit contributions $\{\hat\theta_{i,k}: i\in I_k\}$, and then generate $r$ bootstrap replicates by multinomial reweighting of these fixed contributions; no refitting is performed within replicates.

The definition of $\hat\theta_{i,k}$ depends on the target. In this paper we consider inference for kernelized AOL policy learning and for the ATE:
\[
\hat\theta_{i,k}
\;=\;
\begin{cases}
\hat\theta^{\mathrm{kAOL}}_{i,k}
\;:=\;
\dfrac{\big|\tilde r_{i,k}\big|}{\hat\pi_k(A_i\mid X_i)}\,
\phi\!\Big( A_i\,\mathrm{sign}(\tilde r_{i,k})\,\big(\hat h_k(X_i)+\hat b_k\big)\Big),
& \text{(Section~\ref{sec:kAOL}),}
\\[10pt]
\hat\theta^{\mathrm{ATE}}_{i,k}
\;:=\;
\hat\psi_{i,k}(1)-\hat\psi_{i,k}(0),
& \text{(Sections~\ref{sec:kernel_weights}--\ref{sec:DML_SVM}).}
\end{cases}
\]
Here, for $a\in\{0,1\}$,
\[
\hat\psi_{i,k}(a)
\;:=\;
\hat m_{a,k}(X_i)\;-\;\mathbf{1}(A_i=a)\,\hat\gamma_{a,k}(X_i)\,\big\{\hat m_{a,k}(X_i)-Y_i\big\},
\]
with $(\hat m_{a,k},\hat\pi_k,\hat\gamma_{a,k},\hat h_k,\hat b_k)$ fit on $I_k$. In Section~\ref{sec:kernel_weights} (kernel minimax weights), $\hat\gamma_{a,k}(\cdot)$ denotes the kernel-minimax weights estimated on $I_k$. In Section~\ref{sec:DML_SVM} (DML), $\hat\gamma_{a,k}(X_i)=1/\hat\pi_{a,k}(X_i)$, where $\hat\pi_{a,k}$ is the subset-trained propensity score.

This follows the same computational principle emphasized in the DML bootstrap literature: resample score/contribution terms to capture first-order uncertainty while avoiding repeated refitting of nuisance models. For example, \cite{belloni2018uniformly} studies a multiplier bootstrap for orthogonal scores, and this approach underlies the confidence intervals implemented in the \texttt{DoubleML} package \citep{bach2022doubleml} for \cite{chernozhukov2017double}. While this is computationally-efficient, \cite{tang2024consistency} notes that not refitting the nuisance functions can lead to poor finite-sample coverage and invalid confidence intervals when certain assumptions are not satisfied. Nevertheless, we find in our simulations that nominal coverage is retained even when not refitting in each resample. This is corroborated by Theorem~\ref{thm:norefit}, which shows that, once the per-unit contributions have been computed on a subset, multinomial reweighting of these fixed contributions reproduces the correct large-sample sampling variability of the subset estimator, so refitting within bootstrap replicates is unnecessary for first-order coverage.

\begin{algorithm}
\begin{algorithmic}
\Require Number of subsets $s$, their size $b$, and number of bootstrap replicates, $r$
\For{$k \gets 1$ to $s$}
    \State Randomly permute a pre-defined set of indices $\mathcal{I} = \{1, 2, \ldots, n\}$ and partition into disjoint blocks $I_k = \{i_1, \ldots, i_b\}$
    \State Fit any required nuisance models and/or policy model on the subset $\{Z_i : i \in I_k\}$, and compute per-index contributions
    $\hat{\theta}_{i_j}$ (one value per $i_j \in I_k$), where $\hat{\theta}_{i_j}$ may depend on the fitted objects on subset $k$
    \For{$\ell \gets 1$ to $r$}
        \State Sample $\left(M_{1}, \ldots, M_{b}\right) \sim \textrm{Multinomial}\left(n; \frac{1}{b}, \ldots, \frac{1}{b}\right)$
        \State Calculate the bootstrap empirical distribution $\mathbf{P}^\ast_{n, \ell} \gets \frac{1}{n}\sum_{a=1}^b M_a\delta_{Z_{i_a}}$
        \State Calculate the estimator as the reweighted average of the contributions:
        $\hat{\theta}_{n, \ell} \gets \frac{1}{n}\sum_{a=1}^b M_a \hat{\theta}_{i_a}$
    \EndFor
    \State $\textrm{CI}(\theta)_k \gets (\hat{\theta}_{k}^{0.025}, \hat{\theta}_{k}^{0.975})$, where $\hat{\theta}_{k}^{\alpha}$ denotes the empirical $\alpha$-quantile of $\bigl\{\hat{\theta}_{n,\ell}\bigr\}_{\ell=1}^{r}$
\EndFor
\State $\textrm{CI}(\theta) \gets \left(\frac{1}{s}\sum_{k=1}^s\hat{\theta}_{k}^{0.025}, \frac{1}{s}\sum_{k=1}^s\hat{\theta}_{k}^{0.975}\right)$
\end{algorithmic}
\caption{cBLB}\label{alg:cblb}
\end{algorithm}

\subsection{Properties}
\label{subsec:properties}

As described above, cBLB executes all expensive fitting steps (nuisance estimation, kernel-weight optimization, and/or policy learning) once per subset $I_k$ and then holds the fitted objects fixed across the $r$ bootstrap replicates within that subset. Our simulations show that this ``no-refit within replicates'' implementation attains nominal coverage in finite samples. Here we justify this choice theoretically by showing that, under standard asymptotic linearity of the corresponding full-sample estimator and mild stability of the subset-level contributions, cBLB still attains first-order (asymptotically correct) coverage despite refitting only at the subset level.

\begin{theorem}[First-order validity of cBLB (no refit)]
\label{thm:norefit}
Fix a subset $I_k=\{i_1,\ldots,i_b\}$ of size $b$ and fitted objects computed on $\{Z_i:i\in I_k\}$, yielding contributions
$\{\hat\theta_{i,k}: i\in I_k\}$ and the subset estimator
\[
\hat\theta_{k} := \frac{1}{b}\sum_{i\in I_k}\hat\theta_{i,k}.
\]
Let $M=(M_1,\ldots,M_b)\sim\mathrm{Multinomial}(n;1/b,\ldots,1/b)$ and define the cBLB replicate
\[
\hat\theta^*_{n,k}:=\frac{1}{n}\sum_{a=1}^b M_a\,\hat\theta_{i_a,k}.
\]
Assume:
(i) the corresponding full-sample estimator admits an influence-function expansion with influence function $\psi\in L_2(P)$;
(ii) within the fixed subset $I_k$, the frozen contributions satisfy the first-order approximation 
\[
\hat\theta_{i,k}=\theta_0+\psi(Z_i)+r_{i,k},
\qquad
\frac{1}{\sqrt{b}}\sum_{i\in I_k} r_{i,k}=o_p(1), \mbox{ and }\frac{1}{b}\sum_{i\in I_k} r_{i,k}^2=o_p(1);
\] 
(iii) $b\to\infty$, $b/n\to 0$, and the partition covers all but $o(n)$ observations.

Then, conditional on $\{Z_i:i\in I_k\}$,
\[
\sqrt{n}\big(\hat\theta^*_{n,k}-\hat\theta_k\big)\ \Rightarrow\ N(0,\sigma_k^2),
\qquad 
\sigma_k^2 := \mathrm{Var}_{P_{b,k}}\!\big(\hat\theta_{i,k}\big),
\]
where $P_{b,k}$ denotes the empirical distribution on $\{Z_i:i\in I_k\}$.
Moreover, under (ii) and $b\to\infty$, $\frac{1}{s} \sum_{k=1}^s \sigma_k^2 
\xrightarrow{p} \operatorname{Var}\{\psi(Z)\}$ (or similarly $\forall \varepsilon > 0, 
\frac{1}{s} \sum_{k=1}^s 
\mathbf{1}\!\left\{ \left| \sigma_k^2 - \operatorname{Var}\{\psi(Z)\} \right| > \varepsilon \right\}
\xrightarrow{p} 0$). 

Finally, under standard BLB aggregation conditions (e.g., those ensuring that averaging subset-level bootstrap quantiles targets the corresponding quantiles of the limiting distribution), the cBLB confidence intervals produced by Algorithm~\ref{alg:cblb} achieve asymptotically correct coverage for $\theta_0$.
\end{theorem}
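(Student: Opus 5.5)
The proof has three parts: a conditional central limit theorem for the multinomial reweighting within one subset, a law of large numbers for the subset variances, and the aggregation step.

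\emph{Step 1: conditional CLT within a subset.} Condition on $\{Z_i:i\in I_k\}$, so that the contributions $c_a:=\hat\theta_{i_a,k}$ are fixed numbers. Write $M=\sum_{j=1}^n e_{U_j}$ with $U_1,\ldots,U_n$ i.i.d.\ uniform on $\{1,\ldots,b\}$. Then
\[
\hat\theta^*_{n,k}=\frac{1}{n}\sum_{j=1}^n c_{U_j}
\]
is a mean of $n$ i.i.d.\ draws from the empirical law of the $c_a$. Its conditional mean is $\hat\theta_k$ and its conditional variance is $\sigma_k^2/n$.

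Because $b$ (and the triangular array) changes with $n$, I would apply the Lindeberg--Feller CLT to the row $\{(c_{U_j}-\hat\theta_k)/\sqrt{n}\}_{j\le n}$. The Lindeberg condition reduces to
\[
\frac{1}{b}\sum_a (c_a-\hat\theta_k)^2\,\mathbf 1\{|c_a-\hat\theta_k|>\varepsilon\sqrt n\,\sigma_k\}\to 0 .
\]
I would deduce this from (ii) as follows. The residual part contributes at most $\frac1b\sum r_{i,k}^2=o_p(1)$. The $\psi$ part obeys $\max_{i\in I_k}|\psi(Z_i)|=o_p(\sqrt b)=o_p(\sqrt n)$ since $\psi\in L_2$, and it is uniformly integrable by the weak LLN for $\psi^2$. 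Together with $\sigma_k^2$ bounded away from zero (Step 2), the condition holds in probability, which gives the conditional weak convergence in probability. If $\operatorname{Var}\psi(Z)=0$ the limit is degenerate and the claim is trivial.

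\emph{Step 2: variance consistency.} Expand
\[
\sigma_k^2=\frac1b\sum(\psi_i+r_i)^2-\Big(\frac1b\sum(\psi_i+r_i)\Big)^2,
\]
writing $\psi_i=\psi(Z_i)$ and $r_i=r_{i,k}$.
\begin{itemize}
\item By the WLLN, $\frac1b\sum\psi_i^2\to E\psi^2$ and $\frac1b\sum\psi_i\to0$, since $E\psi=0$ is part of (i).
\item By (ii), $\frac1b\sum r_i^2=o_p(1)$.
\item The cross term satisfies $\frac1b\sum\psi_ir_i\le(\frac1b\sum\psi_i^2)^{1/2}(\frac1b\sum r_i^2)^{1/2}=o_p(1)$ by Cauchy--Schwarz.
\item In the squared mean, $\frac1b\sum r_i=o_p(b^{-1/2})$.
\end{itemize}
Hence $\sigma_k^2\to_p\operatorname{Var}\psi(Z)$ for each $k$. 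The subsets are exchangeable because the partition is a random permutation, so $E|\sigma_k^2\wedge C-V|$ does not depend on $k$. Markov's inequality applied to $\frac1s\sum_k\mathbf 1\{|\sigma_k^2-V|>\varepsilon\}$, whose expectation equals $P(|\sigma_1^2-V|>\varepsilon)\to0$, then gives the indicator form of the claim.

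The averaged form $\frac1s\sum\sigma_k^2\to_p V$ needs a truncation or uniform integrability argument. I expect this to be the main obstacle: (ii) controls the $r_{i,k}$ only in probability, so it gives no control of the means of $\sigma_k^2$. I would therefore present the indicator version as the primary statement. The averaged version follows if, in addition, $\{\sigma_k^2\}$ is uniformly integrable, for instance under bounded contributions.

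\emph{Step 3: aggregation.} By Steps 1--2 and P\'olya's theorem (the normal limit is continuous), the conditional $\alpha$-quantile of $\sqrt n(\hat\theta^*_{n,k}-\hat\theta_k)$ equals $\sigma_k z_\alpha+o_p(1)$. The $r$-replicate empirical quantile adds a further $o_p(1)$ as $r\to\infty$.

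The interval endpoints are averages of $\hat\theta_k+n^{-1/2}(\sigma_k z_\alpha+o_p(1))$. Averaging over $k$ gives $\bar\theta+n^{-1/2}(\sqrt V z_\alpha+o_p(1))$, where $\bar\theta=\frac1s\sum\hat\theta_k$. This is exactly where the assumed BLB aggregation conditions enter: they ensure that the averaged quantiles and the averaged centers are asymptotically equivalent to the full-sample quantities.

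Finally, $\bar\theta$ is an average over $sb=n-o(n)$ units of $\theta_0+\psi_i+r_i$. By (i) and (iii), together with the assumption that the $r$-averages aggregate to $o_p(n^{-1/2})$, we get $\sqrt n(\bar\theta-\theta_0)\Rightarrow N(0,V)$. Combining with Slutsky's lemma, the interval covers $\theta_0$ with probability tending to $\Phi(z_{0.975})-\Phi(z_{0.025})=0.95$.
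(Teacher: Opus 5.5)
Your proposal takes the same route as the paper's sketch: a conditional CLT for the multinomial mean of the frozen contributions, variance consistency from (ii), concentration of the $\sigma_k^2$ across subsets, and then the assumed BLB aggregation step. It is more detailed, using Lindeberg--Feller for the triangular array and exchangeability plus Markov's inequality where the paper simply appeals to a law of large numbers over disjoint subsets. Your two caveats are correct and are passed over in the paper: (ii) alone yields only the indicator form, not $\frac{1}{s}\sum_k\sigma_k^2\xrightarrow{p}\operatorname{Var}\{\psi(Z)\}$, and centering the interval at $\bar\theta$ requires the subset remainders to aggregate to $o_p(n^{-1/2})$, which the per-subset $o_p(b^{-1/2})$ control in (ii) does not supply.
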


\textit{Proof sketch:} Theorem~\ref{thm:norefit} implies that first-order coverage is governed by the estimator's leading $\sqrt{n}$ fluctuation. Within a subset $I_k$, holding fitted objects fixed turns $\{\hat\theta_{i,k}:i\in I_k\}$ into fixed per-unit contributions and $\hat\theta_k=b^{-1}\sum_{i\in I_k}\hat\theta_{i,k}$ into their empirical mean. A cBLB replicate $\hat\theta^*_{n,k}$ is the mean of $n$ draws from the empirical distribution $P_{b,k}$ on these $b$ contributions, so conditional on $\{Z_i:i\in I_k\}$ a conditional CLT yields $\sqrt{n}(\hat\theta^*_{n,k}-\hat\theta_k)\Rightarrow N(0,\sigma_k^2)$ with $\sigma_k^2=\mathrm{Var}_{P_{b,k}}(\hat\theta_{i,k})$. 
Assumption (ii) implies $\sigma_k^2 \to_p \mathrm{Var}\{\psi(Z)\}$  for a fixed subset $I_k$ as $b \to \infty$.  Since the subsets are disjoint and $s \to \infty$, a law of large numbers yields that  the empirical distribution (or average) of $\sigma_k^2$ over $k = 1, \ldots, s$  concentrates at $\operatorname{Var}\{\psi(Z)\}$.
Standard BLB scaling and aggregation across subsets then transfer this first-order validity to the full-sample target, yielding asymptotically correct cBLB confidence intervals for $\theta_0$.

Also note that Algorithm~\ref{alg:cblb} uses percentile (quantile) intervals from the uncentered replicates $\{\hat\theta^*_{n,k,j}\}_{j=1}^r$, and Theorem~\ref{thm:norefit} still applies because conditional quantiles shift by $\hat\theta_k$ (i.e., $Q_\alpha(\hat\theta^*_{n,k}\mid I_k)=\hat\theta_k+Q_\alpha(\hat\theta^*_{n,k}-\hat\theta_k\mid I_k)$), so first-order validity for $\hat\theta^*_{n,k}-\hat\theta_k$ implies first-order validity of the percentile CI after BLB aggregation.

For the ATE target, Theorem~\ref{thm:norefit} applies because
$\hat\theta^{\mathrm{ATE}}_{i,k}=\hat\psi_{i,k}(1)-\hat\psi_{i,k}(0)$ is an
orthogonal-score/AIPW contribution computed using objects fit on $I_k$.
Existing results for DML and augmented minimax linear / kernel-minimax-weight estimators
establish the required influence-function expansion under standard regularity and nuisance-rate
conditions (Sections~\ref{sec:kernel_weights}--\ref{sec:DML_SVM}), which we invoke rather than re-prove.

For kernelized AOL, Theorem~\ref{thm:norefit} applies to fixed-fit inference for the subset-learned scalar criterion: within each subset $I_k$ we learn $(\hat h_k,\hat b_k)$ (and any nuisance components entering $\tilde r_{i,k}$ and $\hat\pi_k$) once, then treat these fitted objects as fixed and bootstrap
\[
\hat\theta_k^{\mathrm{kAOL}}=\frac{1}{b}\sum_{i\in I_k}\hat\theta^{\mathrm{kAOL}}_{i,k}
\]
by multinomial reweighting of $\{\hat\theta^{\mathrm{kAOL}}_{i,k}:i\in I_k\}$.
Conditional on the subset-fitted objects, this is an empirical mean of i.i.d.\ terms, so the cBLB step provides a first-order valid approximation for this fixed-fit criterion. Consequently, in applied use, the cBLB confidence interval for kernelized AOL quantifies first-order sampling uncertainty for the estimated value/criterion of the specific rule trained and then held fixed (within each subset), and it should not be read as uncertainty about the learned rule itself or the optimal value after re-optimizing.

Theorem~\ref{thm:norefit} does \emph{not} address inference for the sampling distribution of the learned decision rule itself (e.g., $(\hat h_k,\hat b_k)$), nor the value of a rule re-optimized within each bootstrap replicate, nor ``optimal value'' statements involving an $\arg\min/\arg\max$ over a function class; these require separate analysis.

\subsubsection{Computational complexity}
Let $C_{\mathrm{fit}}(m)$ be the cost of running the full fitting pipeline (nuisance fits and any kernel-weight/policy optimization) on $m$ observations.
A standard bootstrap with $r$ refits costs
\[
T_{\mathrm{boot}}=\Theta\!\big(r\,C_{\mathrm{fit}}(n)\big).
\]
cBLB fits once per subset and only reweights within replicates, so
\[
T_{\mathrm{cBLB}}
=\Theta\!\big(s\,C_{\mathrm{fit}}(b)\big)+\Theta(srb).
\]
With disjoint subsets covering most observations ($s\asymp n/b$),
\[
T_{\mathrm{cBLB}}
=\Theta\!\Big(\frac{n}{b}\,C_{\mathrm{fit}}(b)\Big)+\Theta(nr).
\]
Thus, when $C_{\mathrm{fit}}(m)$ grows faster than linearly in $m$ (typical for kernel methods/convex programs), taking $b=n^\gamma$ with $\gamma\in(0,1)$ yields a strict asymptotic improvement over $T_{\mathrm{boot}}$.

\section{Simulation}
\label{sec:simulate}

We conduct three simulations considering three estimators: AOL for policy learning, and kernel minimax weights, and DML for estimation of causal effects.

{\bf Data-generating mechanism}: For policy learning, the DGP comes from \cite{zhou2017augmented} and is shown in \eqref{eq:poldgp}. 

\begin{align}\label{eq:poldgp}
X_1, X_2, \ldots X_5 &\overset{\textrm{i.i.d.}}{\sim} \textrm{Unif}(-1, 1) \nonumber \\ 
\textrm{Pr}(W = 1 | \mathbf{X}) &= \textrm{Pr}(W = -1 | \mathbf{X}) = 0.5 \nonumber \\
\epsilon &\sim \textrm{Normal}(0, 1) \nonumber \\
Y &= 0.5 + 0.5X_1 + 0.8X_2 + 0.3X_3 - 0.5X_4 + 0.7X_5 + \\
&+ W(0.2 - 0.6X_1 - 0.8X_2) + \epsilon \nonumber
\end{align}

In this case, the optimal policy is given by:
\[
\delta(\mathbf{X}) = \begin{cases}
    1 & 0.2 - 0.6X_1 - 0.8X_2 > 0 \\
    0 & \textrm{otherwise}
\end{cases}
\]

For for estimation of causal effects with kernel minimax weights and DML, the data-generating process (DGP) is delineated in equation \eqref{eq:dgm}. 
\begin{align}\label{eq:dgm}
X_1, X_2 &\overset{\textrm{i.i.d.}}{\sim} \textrm{Normal}(0, 1) \nonumber \\ 
\textrm{Pr}(W = 1 | X_1, X_2) &= \dfrac{1}{1 + \exp(- 0.5X_1 - 0.5X_2)} \nonumber \\
\epsilon &\sim \textrm{Normal}(0, 1) \\
Y(0) &= X_1 + X_2 + \epsilon \nonumber \\
Y(1) &= Y(0) + \tau W \nonumber.
\end{align}
For each individual subject $i$, we generate two independent confounding variables $(X_{1i}, X_{2i})$ and calculate the propensity score by applying the inverse-logit function to a linear amalgamation of these covariates. An error term $\epsilon$ is then drawn independently and identically from a standard normal distribution to construct the outcomes under control and treatment conditions. The treatment effect remains constant across all subjects, with $\tau = 0.8$. The level of overlap between covariates is regulated by a specific parameter in the inverse-logit equation; a value of $-\frac{1}{2}$ is used to create a certain degree of imbalance between the treatment and control groups in relation to some of the covariates. 

{\bf Estimand}: For the optimal policy case, the estimand is the optimal value, that is, the expected outcome when the optimal policy is followed. For DML and kernel minimax weights, the estimand of interest $\tau$ is the average treatment effect (ATE): $\tau = E(Y(1) - Y(0))$. 

{\bf Methods}: We construct \reps{} replications of the DGP varying the number of subsets $s$. 

For  policy learning, we estimate individualized treatment regimes by solving a regularized empirical risk minimization problem using an augmented outcome-weighted learning (AOL) loss function:

\[
\min_f \dfrac{1}{n} \sum_{i=1}^n \dfrac{|\tilde r_i|}{\pi(w_i, \mathbf{x}_i)}\phi\left(a_i \operatorname{sign}(\tilde r_i f(\mathbf{x}_i)\right) + \dfrac{\lambda}{2}||f||^2
\]

where $f$ is the non-linear decision boundary (represented by $h(\mathbf{x}) + b$ with $h(\mathbf{x}) \in \mathcal{H}_K$, a reproducing kernel Hilbert space associated with the linear
kernel function $K$), $\tilde r_i$ is a residual of a regression of the outcome on a function of covariates, $||f||$ is the norm $||\cdot||_K$, $\phi$ is the Huberized hinge loss function, and $\lambda$ is a tuning parameter. 

For DML, both the outcome and propensity score models are fit using SVM with a linear kernel. For kernel minimax weights, as proposed previously \citep{kallus2021more,kallus2022optimal}, we used a polynomial kernel $\mathcal{K}_a(z,z') = C\,(z^\top z')^{d_1} + \sigma_a^2\,\delta^\ast(z,z')$, fixing $d_1=1$ and tuning $C$ and $\sigma_a^2$ as hyperparameters; for data generation/modeling we assumed the arm-specific regression functions $m_1(X)$ and $m_0(X)$ follow Gaussian process priors with kernels $\mathcal K_1$ and $\mathcal K_0$, and outcomes satisfy $Y_i = m(Z_i) + \varepsilon_i$ with $\varepsilon_i \sim \mathcal N(0,\sigma_a^2)$.

{\bf Performance Measures}:
We construct percentile confidence intervals \citep{efron1994introduction}. We want to ensure that the bootstrapped confidence intervals have at least nominal coverage, i.e.,

\[
\textrm{Pr}(\tau \in \textrm{CI}^\ast) \geq 1 - \alpha.
\]

{\bf Full Sample Timing Comparison} To show the timing advantages of the cBLB, we compare the timing of the cBLB to a fair comparison of the full-sample bootstrap, where we do not re-fit or re-optimize in each bootstrap resample. All optimization/fitting is done once.

\subsection{Results}

\textbf{Coverage}  Figures \ref{fig:zip_policy}, \ref{fig:zip_dml}, and \ref{fig:zip_aipw} show ``zip plots'' of the confidence intervals for varying $s$ \citep{morris2019using}. We see that we are able to obtain nominal coverage in each case (although DML is slightly conservative).

\begin{figure}
    \centering
    \includegraphics[scale = 0.75]{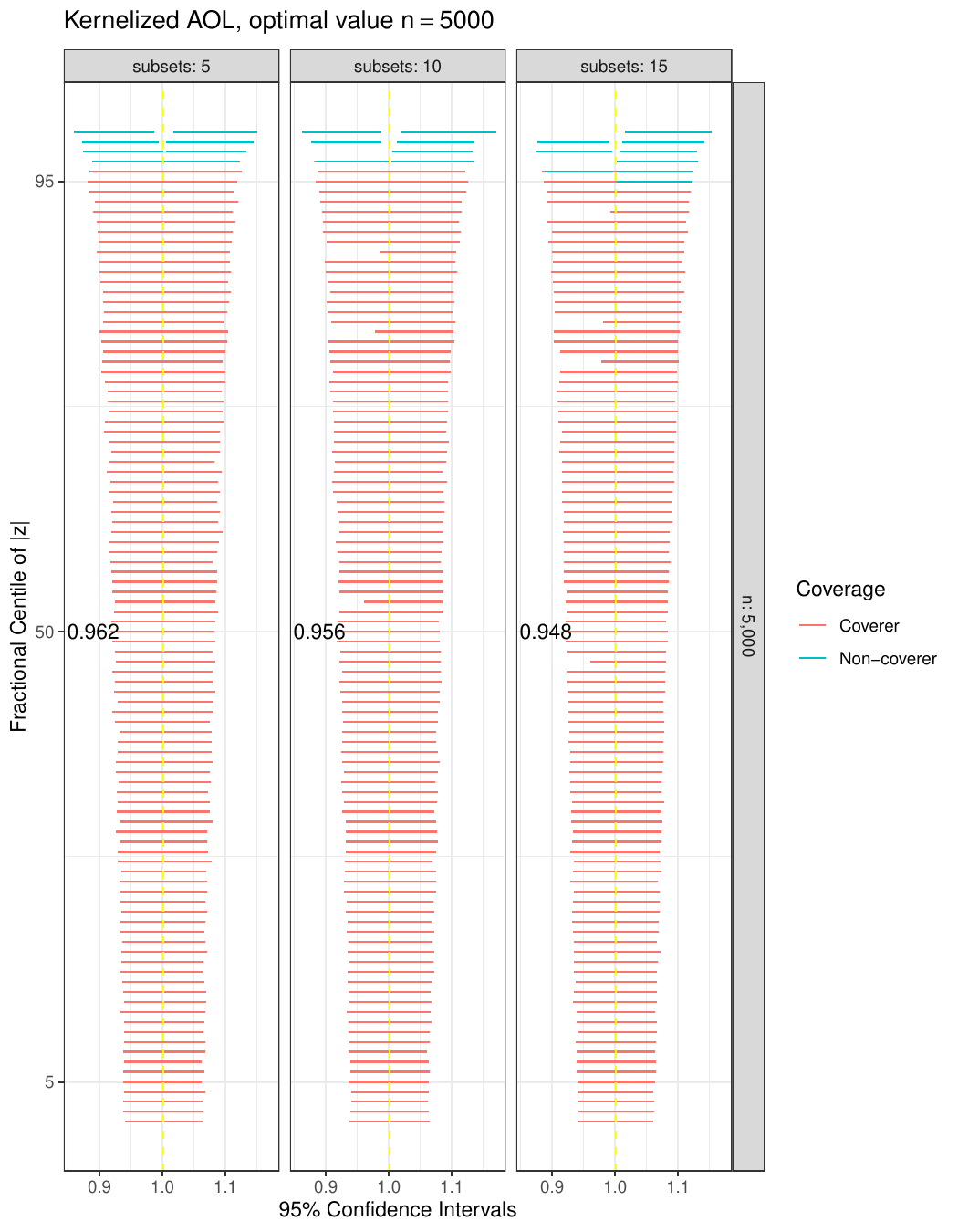}
    \caption{Confidence intervals for the optimal value from \reps{} replications from the cBLB algorithm, Kernelized AOL)}
\label{fig:zip_policy}
\end{figure}

\begin{figure}
    \centering
    \includegraphics[scale = 0.75]{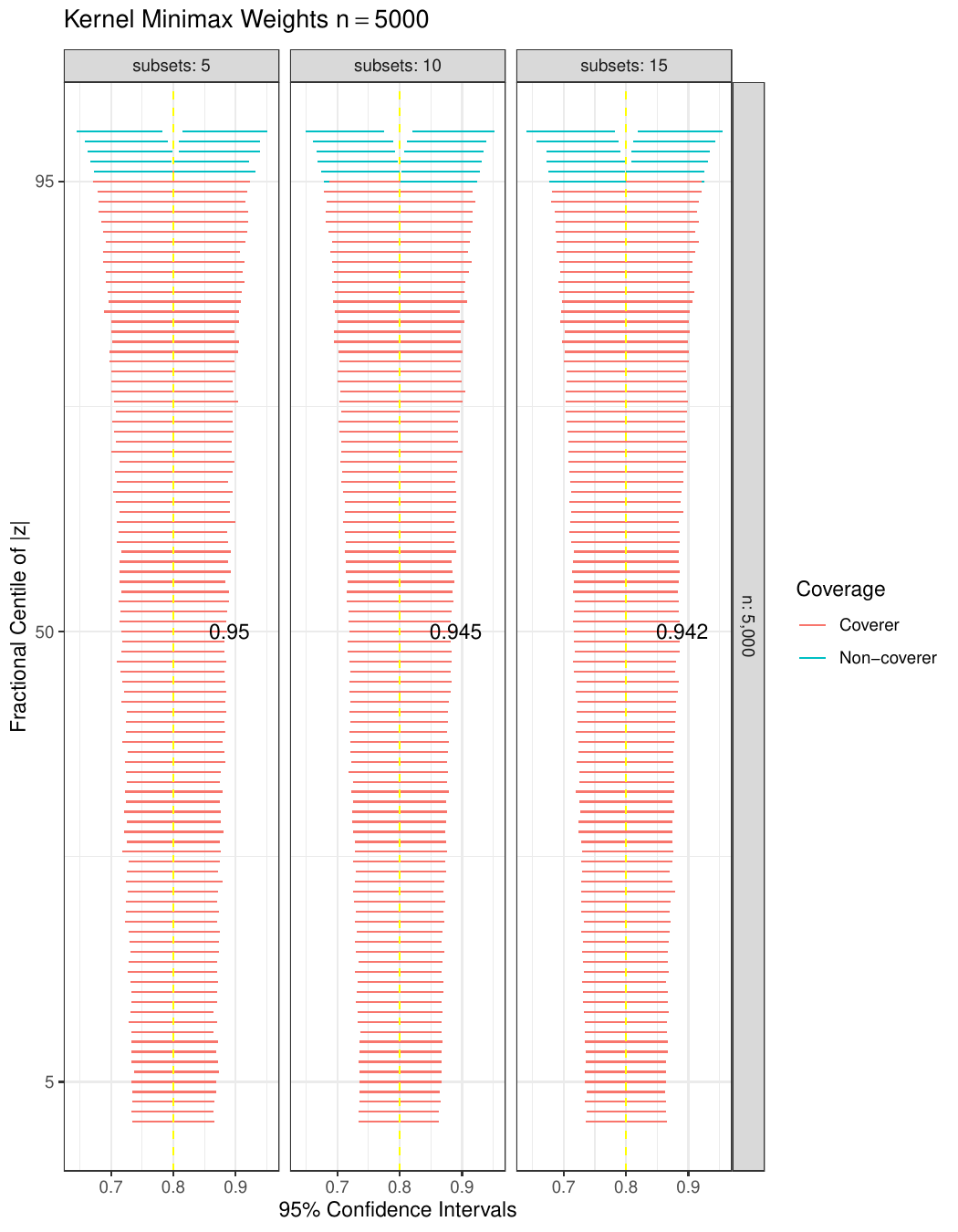}
    \caption{Confidence intervals for the ATE from \reps{} replications from the cBLB algorithm, Kernel Minimax Weights)}
\label{fig:zip_aipw}
\end{figure}

\begin{figure}
    \centering
    \includegraphics[scale = 0.75]{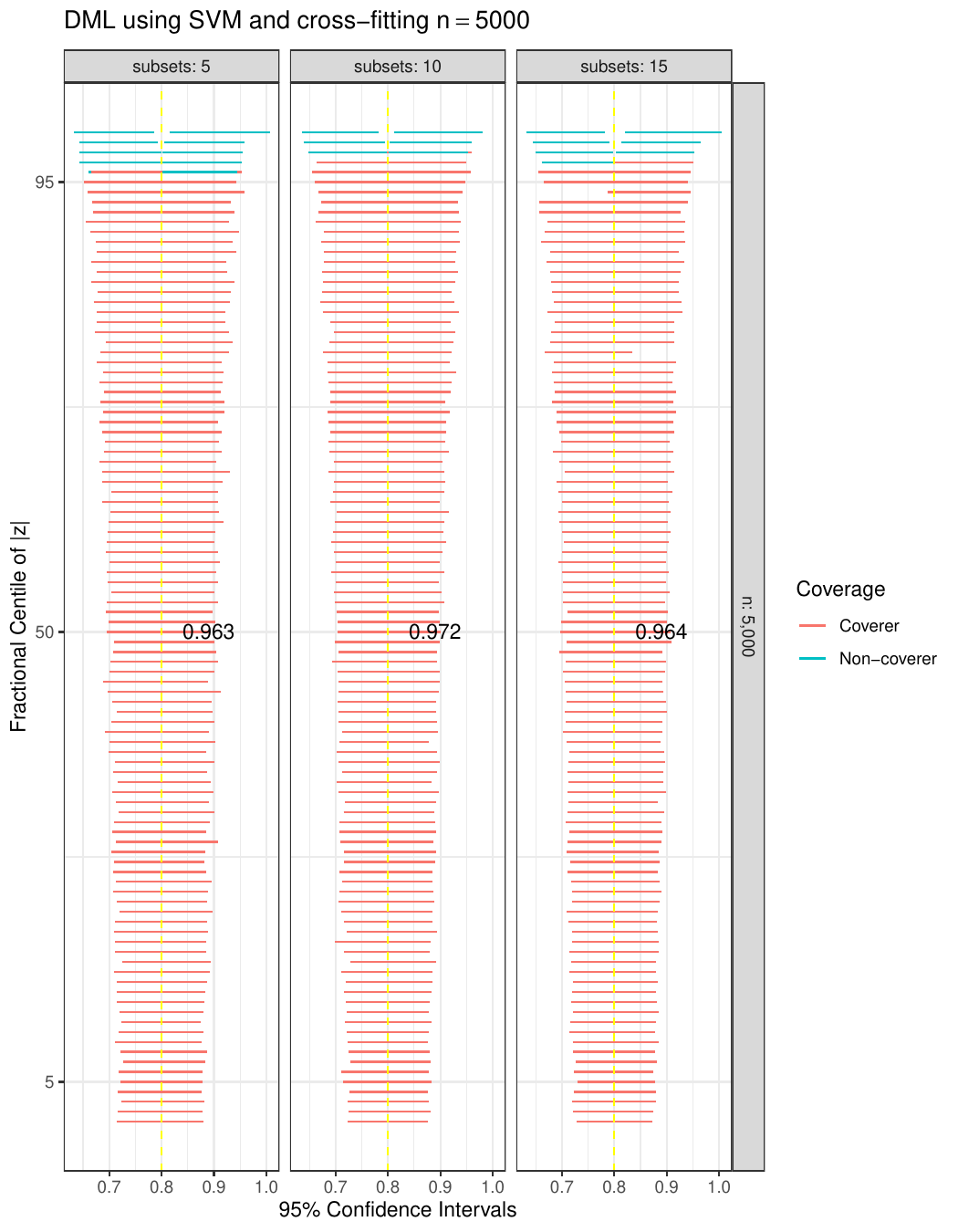}
    \caption{Confidence intervals for the ATE from \reps{} replications from the cBLB algorithm, DML using SVM and cross-fitting)}
\label{fig:zip_dml}
\end{figure}

\textbf{Timing.} 

Figures \ref{fig:timing_policy}, \ref{fig:timing_aipw}, and \ref{fig:timing_dml} show the comparison of the full bootstrap analogue to our cBLB for \timereps{} replications.\footnote{The full bootstrap here mimics cBLB in not refitting within each bootstrap resample.} In addition, figure \ref{fig:timing_scaling_policy} shows how poorly full bootstrap scales for large data in the policy learning case.

\begin{figure}
    \centering
    \includegraphics[scale = 0.75]{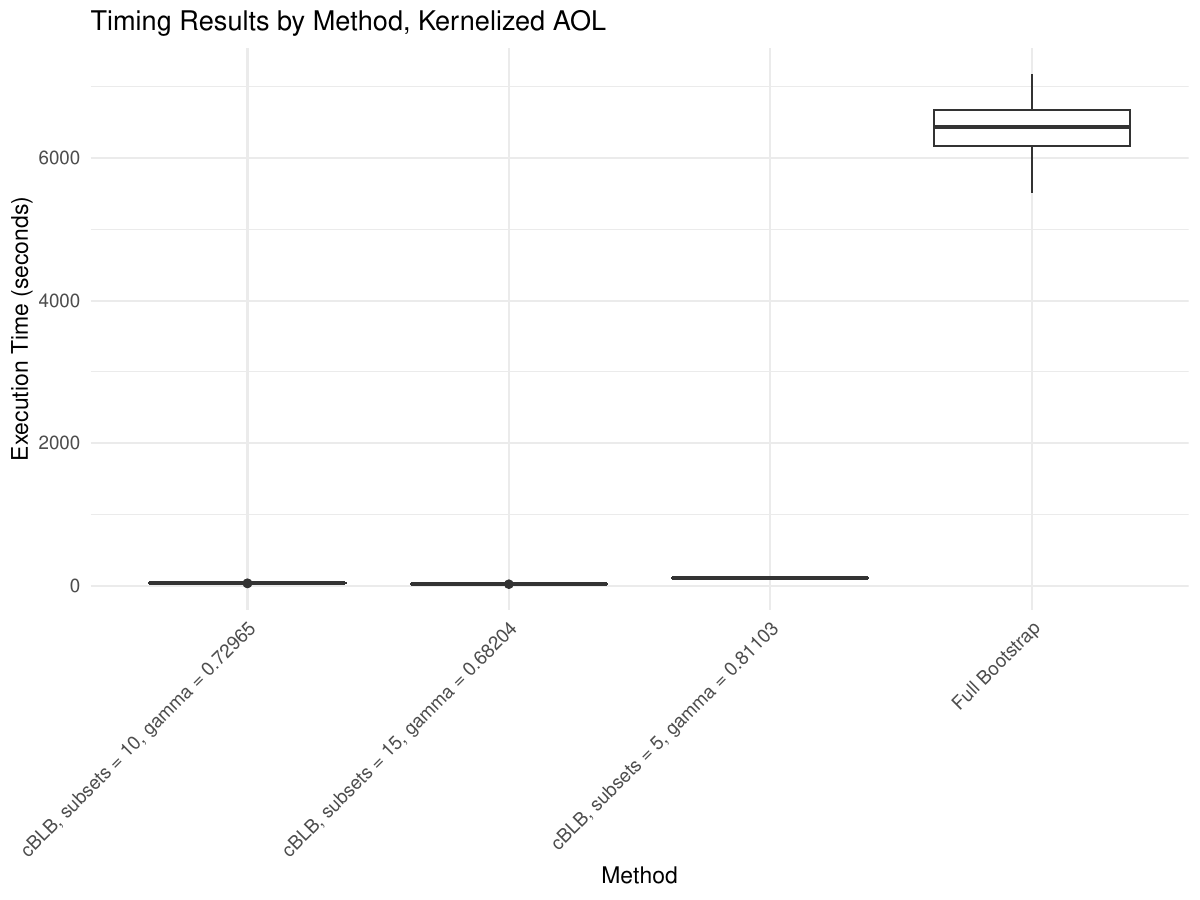}
    \caption{Timing results from \timereps{} replications of the cBLB algorithm ($n = 5000$) for Kernelized AOL}
    \label{fig:timing_policy}
\end{figure}

\begin{figure}
    \centering
    \includegraphics[scale = 0.75]{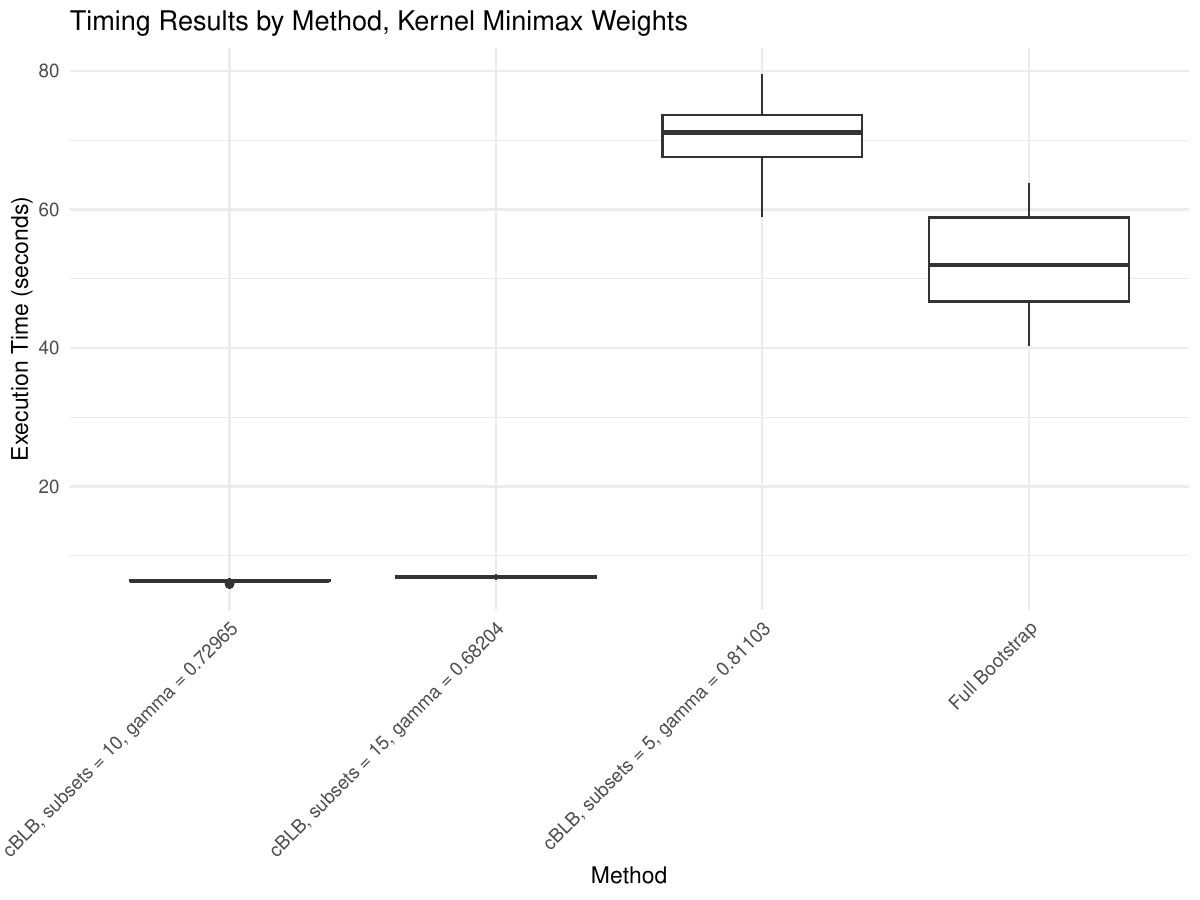}
    \caption{Timing results from \timereps{} replications of the cBLB algorithm ($n = 5000$) for Kernel Minimax Weights}
    \label{fig:timing_aipw}
\end{figure}

\begin{figure}
    \centering
    \includegraphics[scale = 0.75]{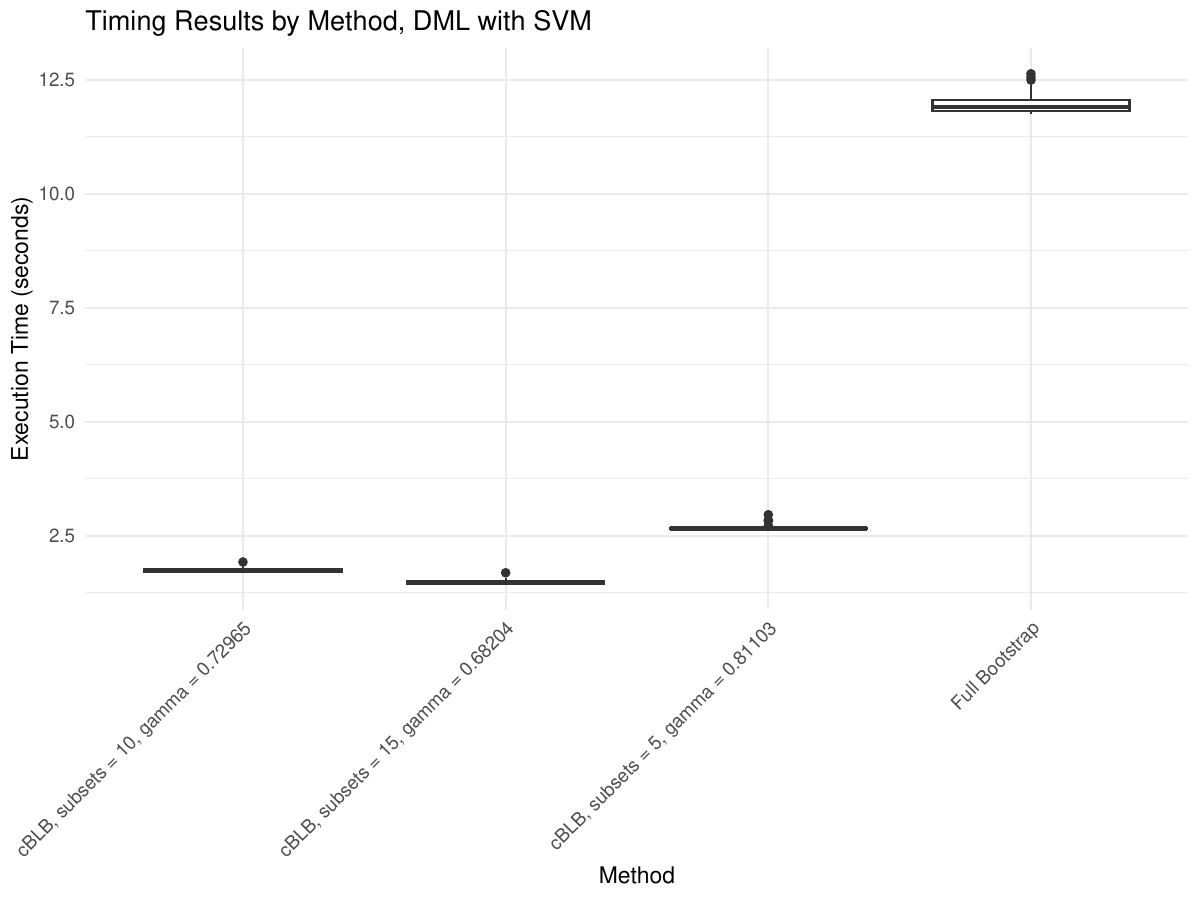}
    \caption{Timing results from \timereps{} replications of the cBLB algorithm ($n = 5000$) for DML with SVM}
    \label{fig:timing_dml}
\end{figure}

\begin{figure}
    \centering
    \includegraphics[scale = 0.75]{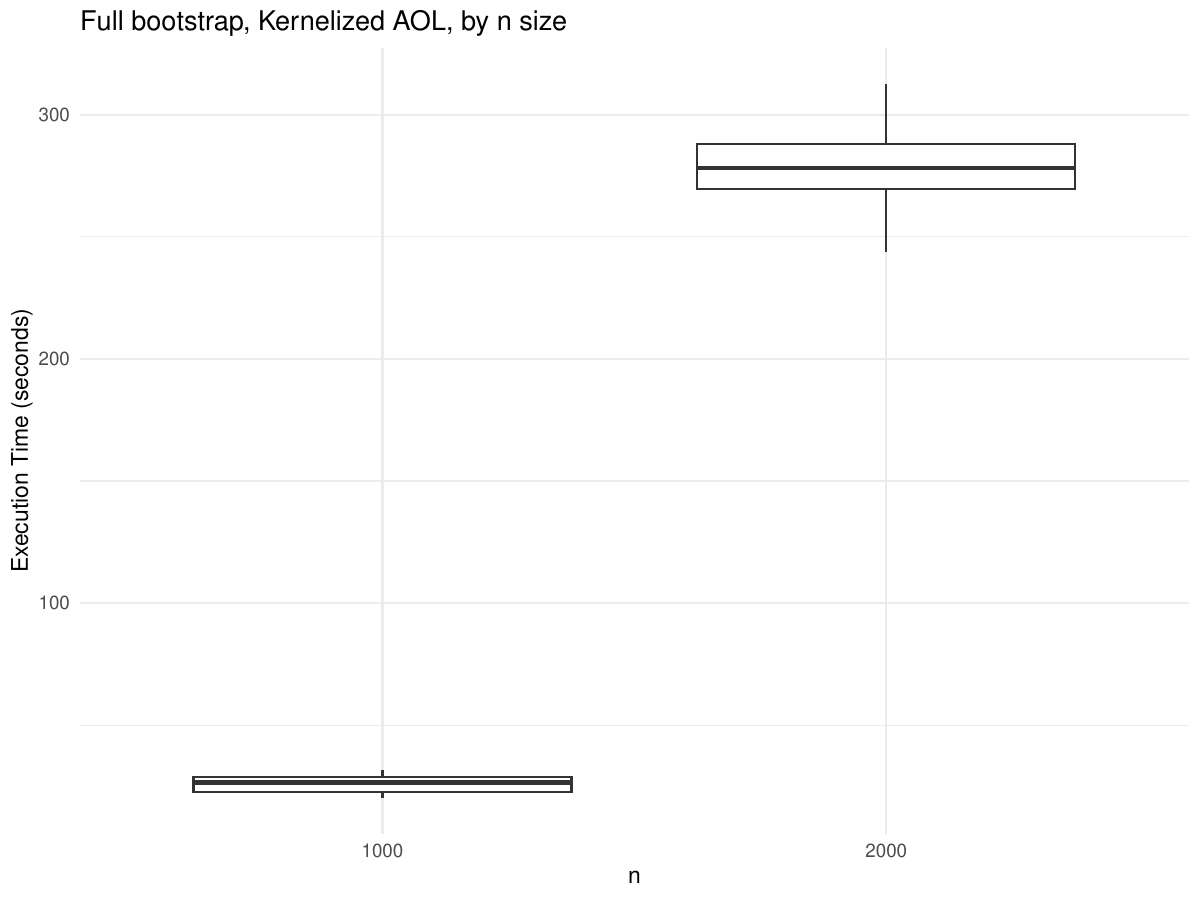}
    \caption{Timing results from \timereps{} replications of the full bootstrap for policy learning with Kernelized AOL}
    \label{fig:timing_scaling_policy}
\end{figure}

\subsection{Practical Considerations}
\label{sec:pract_guidelines}

\subsubsection{Size of the subsets $b$} The selection of subset size and the number of such subsets are closely interrelated and influence one other. Generally, a smaller number of observations within each subset necessitates a greater number of subsets to yield satisfactory results. In the foundational BLB paper, the authors recommend a formula for determining subset sizes $b$ relying on a parameter $\gamma$ such that $b = n^{\gamma}$ \citep[p. 20]{kleiner2014scalable}. After extensive simulations, they conclude that a $\gamma$ value of 0.7 often serves as "an effective and sensible choice" in a variety of contexts \citep[p. 20]{kleiner2014scalable}. In our previous work \cite{kosko2023fast}, we found that the appropriate $\gamma$ changes by estimating method. In this paper, we recommend deciding on the subset size based on domain knowledge, trading off the computational complexity of the algorithm against the minimum number of observations needed to accurately optimize or estimate. Each subset size was chosen to be in the several hundreds or low thousands of observations.  

\subsubsection{Number of subsets $s$} 
The number of subsets required depends on the size of the subset chosen. In the original paper, \cite{kleiner2014scalable} found that 1-2 subsets could be sufficient for a $\gamma = 0.9$ and recommended 10-20 subsets for $\gamma = 0.7$. \cite{kleiner2014scalable} prove that the BLB enjoys the same higher order correctness as the bootstrap as long as we choose disjoint subsets and the number of subsets is approximately $\frac{n}{b}$. Thus, once the subset size or number of subsets is decided, the other is given automatically by the formula $b \approx \frac{n}{s}$. 

\subsubsection{Estimands} Our simulations estimate (1) the ATE and (2) the optimal value of an outcome from the optimal regime, but we can in fact calculate any type of estimand with the cBLB. It is agnostic to estimand.

\subsubsection{Constructing confidence intervals and hypothesis tests} We can use any method to construct confidence intervals, including percentile and normal \citep{efron1994introduction, davison1997bootstrap}. In this simulation, we use percentile.

\section{Application to the 2023 U.S. Natality Data}\label{sec:application}

\subsection{Study Population}\label{sec:studypop}

We conducted a case study using the 2023 U.S. natality public-use microdata 2023 Natality Public Use File from the National Vital Statistics System (NVSS), compiled by the National Center for Health Statistics (NCHS) from U.S. birth certificates submitted by state and territorial vital registration jurisdictions \citep{nchs_natality2023_userguide}. 

The 2023 data %microdata are distributed as fixed-width records of length 1330 characters; the 2023 file 
contains 3{,}605{,}081 U.S. birth records  \citep{nchs_natality2023_userguide}. NCHS reports that more than 99\% of births occurring in the United States in 2023 were registered, and emphasizes that item-level completeness varies across variables and jurisdictions \citep{nchs_natality2023_userguide}.

The dataset includes detailed measures of maternal sociodemographic characteristics (e.g., age, race/ethnicity, education, marital status), paternal characteristics, prenatal care utilization (e.g., month prenatal care began, number of visits), maternal behaviors and health (e.g., cigarette use before and during pregnancy, pre-pregnancy BMI), labor/delivery characteristics, and infant outcomes\citep{nchs_natality2023_userguide}.

The analytic cohort was restricted to singleton live births with plausible birthweights (350--6000 g) and complete information on maternal and pregnancy covariates. The primary outcome was infant birthweight in grams, and the exposure was an indicator for any maternal cigarette use during pregnancy, defined as positive self-reported cigarettes per day in any trimester. To address confounding, we adjusted for maternal age, education, race, Hispanic origin, marital status, parity, prenatal care timing, and payment source. We learned an individualized treatment regime using Kernelized Augmented Outcome-Weighted Learning and then estimated the value of the learned regime. We estimated ATE using both Kernel Minimax Weights with kernels as in our simulations and DML with a linear kernel. 

Table \ref{tab:baseline_smoking} shows some baseline characteristics of the sample by smoking status. Table \ref{tab:baseline_covariates} shows the distribution of important covariates by smoking status. 

\begin{table}[t]
\centering
\caption{\label{tab:baseline_smoking}Baseline characteristics by smoking status (NVSS)}
\centering
\begin{tabular}[t]{llllll}
\toprule
Smoking status & N & Mean birthweight & SD birthweight & Mean age & SD age\\
\midrule
Non-smoker & 2,882,907 & 3279.397 & 552.137 & 29.401 & 5.797\\
Smoker &    97,832 & 3056.878 & 589.699 & 29.277 & 5.623\\
\bottomrule
\end{tabular}
\end{table}

\begin{table}[t]
\centering
\caption{\label{tab:baseline_covariates}Covariate distributions by smoking status (NVSS)}
\centering
\begin{tabular}[t]{l >{\RaggedRight\arraybackslash}p{6cm} ll}
\toprule
Covariate & Level & Non-smoker & Smoker\\
\midrule
Age\_group & $<$ 20 &   120,609 (4.2\%) &     2,955 (3.0\%)\\
 & 20-24 &   510,662 (17.7\%) &    19,070 (19.5\%)\\
 & 25-29 &   805,023 (27.9\%) &    28,421 (29.1\%)\\
 & 30-34 &   872,505 (30.3\%) &    28,933 (29.6\%)\\
 & 35-39 &   464,653 (16.1\%) &    14,940 (15.3\%)\\
 & 40+ &   109,455 (3.8\%) &     3,513 (3.6\%)\\
Education & 8th grade or less &    97,117 (3.4\%) &     2,035 (2.1\%)\\
 & 9th through 12th grade with no diploma &   218,887 (7.6\%) &    21,702 (22.2\%)\\
 & Associate degree (AA, AS) &   246,091 (8.5\%) &     5,256 (5.4\%)\\
 & Bachelor’s degree (BA, AB, BS) &   661,003 (22.9\%) &     2,674 (2.7\%)\\
 & Doctorate (PhD, EdD) or Professional Degree (MD, DDS, DVM, LLB, JD) &    91,746 (3.2\%) &        79 (0.1\%)\\
 & High school graduate or GED completed &   764,892 (26.5\%) &    44,812 (45.8\%)\\
 & Master’s degree (MA, MS, MEng, MEd, MSW, MBA) &   312,907 (10.9\%) &       542 (0.6\%)\\
 & Some college credit, but not a degree &   490,264 (17.0\%) &    20,732 (21.2\%)\\
Precare & No prenatal care &    63,948 (2.2\%) &     6,631 (6.8\%)\\
Race  & AIAN (only) &    28,036 (1.0\%) &     2,323 (2.4\%)\\
  & Asian (only) &   155,581 (5.4\%) &       392 (0.4\%)\\
  & Black (only) &   475,092 (16.5\%) &    12,599 (12.9\%)\\
  & More than one race &    80,133 (2.8\%) &     4,486 (4.6\%)\\
  & NHOPI (only) &    10,863 (0.4\%) &       173 (0.2\%)\\
  & White (only) & 2,133,202 (74.0\%) &    77,859 (79.6\%)\\
\bottomrule
\end{tabular}
\end{table}

\subsection{Results}\label{sec:whiresults}

Table~\ref{tab:casestudy} presents the causal estimates, confidence intervals, and computation times for each method using the cBLB procedure. We organize our discussion around three aspects: the substantive estimates, their alignment with the existing literature, and the computational performance of the cBLB relative to the standard bootstrap.

\subsubsection{Average treatment effect estimates} The kernel minimax weights estimator yields an estimated ATE of $-217.1$ grams (95\% CI: $-220.9$, $-213.2$), while the DML with SVM estimator produces an estimate of $-185.8$ grams (95\% CI: $-189.9$, $-181.7$). Both estimates are negative, statistically significant, and fall within the range of 150--250 grams typically reported in the epidemiological literature on smoking and birthweight \citep{hellerstedt1995effects, wisborg2001maternal, cattaneo2010efficient}. The confidence intervals are narrow, reflecting the precision afforded by the large sample size, with standard errors of 1.86 and 2.03 grams, respectively. The approximately 31-gram discrepancy between the two ATE estimates likely reflects differences in the underlying modeling assumptions: the kernel minimax approach minimizes worst-case covariate imbalance in a reproducing kernel Hilbert space, while the DML estimator relies on SVM-based nuisance function estimation with cross-fitting. Despite this difference, the qualitative conclusions are consistent across both methods.

\subsubsection{Optimal regime value} The kernelized AOL estimator yields an estimated optimal value of 3{,}253.6 grams (95\% CI: 3{,}250.0, 3{,}257.2). This quantity is on a fundamentally different scale from the ATE estimates: it represents the expected birthweight when each mother is assigned to the treatment recommended by the learned optimal regime. Given the well-established harm of smoking, the learned regime effectively recommends abstinence from smoking for all mothers, and the estimated optimal value is therefore interpretable as the mean birthweight in a population where no mother smokes during pregnancy. This estimate is consistent with population-level mean birthweights reported for nonsmoking mothers in the United States \citep{nchs_natality2023_userguide}, and it aligns with our estimated mean birthweight had everyone not smoked from  DML with SVM and AMLE (point estimates 3,278.1 and 3,278.5, respectively; Table~\ref{tab:casestudy}). Together, these results support that the learned policy is optimal in this setting, namely, recommending that no one smoke during pregnancy. As discussed in Section~\ref{subsec:properties}, the cBLB confidence interval for kernelized AOL quantifies first-order sampling uncertainty for the estimated value/criterion of the specific rule trained and then held fixed (within each subset), and it should not be read as uncertainty about the learned rule itself or the optimal value after re-optimizing.

\begin{table}[t]
\centering
\caption{\label{tab:casestudy} ATE estimates and mean outcome under the optimal policy for the NVSS dataset}
\begin{threeparttable}
\begin{tabular}[h]{lllllll}
\toprule
Setting & $E(Y(0))$\tnote{a} & Lower CI & Upper CI & Estim. & S.E. & Time (seconds)\\
\midrule
Kernel minimax weights & 3278.498 & -220.854 & -213.237 & -217.051 & 1.863 & 20506.980\\
DML with SVM & 3278.142 & -189.923 & -181.682 & -185.838 & 2.031 & 39260.148\\
Kernelized AOL & -- & 3249.961 & 3257.212 & 3253.579 & 1.770 & 347503.329\\
\bottomrule
\end{tabular}
\begin{tablenotes}
    \item[a] \tiny Under standard assumptions $E[Y(0)] = E[ E[Y | A=0,X] ]$
\end{tablenotes}
\end{threeparttable}
\end{table}

\subsubsection{Computational performance} A central motivation for this work is the computational infeasibility of standard bootstrap inference with kernel-based estimators at this scale. In our application, the standard nonparametric bootstrap failed outright for both the kernel minimax weights and kernelized AOL estimators, and ran for approximately one week without producing results for the DML estimator. By contrast, the cBLB procedure successfully produced valid confidence intervals for all three methods, with computation times ranging from approximately 5.7 hours for kernel minimax weights to 10.9 hours for DML with SVM and 96.5 hours for kernelized AOL. The longer runtime for kernelized AOL reflects the $\mathcal{O}(n^2)$ cost of computing and manipulating the kernel Gram matrix within each subset, compounded by the L-BFGS optimization required for policy learning. These computation times were obtained without parallelization across subsets; exploiting the embarrassingly parallel structure of cBLB would reduce runtimes proportionally to the number of available cores.

\section{Conclusions}
\label{sec:conclusion}
This paper addresses the pressing computational challenges of conducting valid inference with kernel-based causal estimators on large real-world datasets. Building on recent work on the causal bag of little bootstraps (cBLB), we extend this scalable bootstrap framework to accommodate a wide class of kernel-based methods used in causal inference. These include kernel minimax weights in reproducing kernel Hilbert spaces (RKHS), doubly robust estimators within the double machine learning (DML) framework, and optimal policy learning estimators based on residual weighted classification.

Through extensive simulations, we demonstrate that the proposed algorithm achieves nominal coverage for confidence intervals while offering dramatic improvements in computational efficiency compared to standard nonparametric bootstrapping. These gains hold across a range of causal estimands, including average treatment effects and optimal policy values, and across multiple domains with varying levels of model complexity. The results also suggest that cBLB can deliver reliable inference even when nuisance components are estimated using complex machine learning models.

From a methodological standpoint, our work contributes a flexible resampling framework that is agnostic to the causal estimand and can be adapted to other high-dimensional or kernel-based methods, including those based on Gaussian processes, support vector machines, and RKHS-regularized estimators. This flexibility is crucial for practitioners seeking scalable uncertainty quantification tools that maintain statistical rigor in the presence of rich covariate information and complex treatment assignment mechanisms.

However, several limitations remain. First, although our method avoids repeated nuisance model fitting in the bootstrap loop, this relies on the assumption that re-using fitted nuisance functions provides adequate finite-sample inference. While our simulations and theory suggest this is reasonable in practice, there may be situations in which this assumption may fail. Second, the performance of our algorithm depends on subset size and the number of bags, which require tuning and may depend on the data generating process or estimand. Third, the current implementation does not include automatic selection of kernel parameters or regularization penalties, which may limit performance in some applications. Fourth, in this paper, we do not provide theoretical guarantees for (i) the distribution of the learned decision rule itself, (ii) the value of a re-learned rule obtained by re-optimizing the policy objective inside each bootstrap replicate, or (iii) ``optimal value'' targets defined by maximization over a rich policy class; these problems can be nonregular and generally require separate stability/smoothness arguments beyond the fixed-fit resampling results above. These questions are left for future work.

Future research should explore adaptive schemes for choosing hyperparameters within the cBLB, including automated tuning of subset size, kernel bandwidths, and regularization parameters. Additionally, further theoretical work is needed to characterize the finite-sample coverage properties of BLB-style methods. Lastly, extending the cBLB to multi-valued or continuous treatment settings and time-varying exposures would further increase its practical relevance, especially in biomedical and policy evaluation contexts.

Overall, our results show that scalable inference for kernel-based causal estimators is not only feasible but also statistically valid in a wide range of realistic scenarios. We believe that the proposed approach provides a valuable blueprint for integrating advanced machine learning estimators into causal analysis pipelines at scale.

\bibliography{biblio}
\newpage
\if1\journal{
\setcounter{page}{1}
} \fi
\appendix

\end{document}